\newcommand\Lie{\pounds}
\newtheorem*{mydef*}{Definition}
\newtheorem*{theorem*}{Theorem}
\newtheorem{Identity}{Identity}
\newcommand{\be}{\begin{equation}}
\newcommand{\ee}{\end{equation}}
\newcommand{\beq} {\begin{equation}}
\newcommand{\eeq} {\end{equation}}
\newcommand{\ba}{\begin{eqnarray}}
\newcommand{\ea}{\end{eqnarray}}
\begin{document}

\begin{frontmatter}



\title{Lagrangian Dynamics of Spinning Pole-Dipole-Quadrupole    Particles in
Metric-Affine Geometries}


\author[first,f]{Damianos Iosifidis}
\affiliation[first]{organization={Scuola Superiore Meridionale, Largo San Marcellino 10, 80138 Napoli, Italy},
}

\affiliation[f]{organization={INFN– Sezione di Napoli, Via Cintia, 80126 Napoli, Italy}
}

\begin{abstract}
We construct the Lagrangian formulation of a micro-structured spinning, dilating and shearing (deformable) test body, moving in arbitrary non-Riemannian backgrounds possessing all geometrical entities of curvature, torsion and non-metricity. We start with a Lagrangian of a generic form that depends on the particle's velocity, its material frame and its absolute derivative, and the background geometry consisting  of a metric and an independent affine connection. Performing variations of the path and the material frame, we derive the equations of motion for the particle that govern the evolution of its momentum and hypermomentum in this generic background. The reported equations of motion generalize those of a spinning particle (Mathisson \cite{Mathisson:1937zz}, Papapetrou \cite{Papapetrou:1951pa}, Dixon \cite{Dixon:1974xoz}) by the inclusion of the dilation and shear (hadronic) currents of matter. Using the derived equations of motion, a generalized conserved quantity is also found. Further conserved quantities that can be obtained by appropriate supplementary conditions are also discussed.

\end{abstract}



\begin{keyword}
Particle motion \sep Spin \sep Hypermomentum \sep Torsion \sep Non-metricity \sep Metric-Affine Geometry \sep  Lagrangian formulation



\end{keyword}

\end{frontmatter}




\section{Introduction}
\label{introduction}

The  motion of a spinning test body under the influence of the gravitational field is  a fundamental theoretical problem of ultimate importance. The equations of motion describing such a spinning body in a given Riemannian background have been available for quite some time due to the pioneering works of Mathisson \cite{Mathisson:1937zz}, Papapetrou, \cite{Papapetrou:1951pa} and later Dixon \cite{Dixon:1974xoz}. The main conclusion from these works is that a spinning body does not follow a geodesic; there are additional forces pulling it away from the geodesic trajectory. The Lagrangian formulation for the equations of motion of a spinning particle was subsequently developed in \cite{Bailey:1975fe}. Restricting the variable dependence of the Lagrangian in a certain way, it is then possible to describe spinning tops, as has been done in \cite{Hanson:1974qy} for Minkowski space and later generalized for Riemannian spaces in \cite{hojman1976spinning}. 

These studies focus on motion over flat Minkowski or Riemannian backgrounds. It is then interesting to find out what kind of particles need to be used in order to measure possible departures from the Riemannian geometry and be able to probe the non-Riemannian (i.e. torsion and non-metricity) aspects of the spacetime structure. As it turns out, the way to detect these possible meta-Riemannian effects of spacetime, is by using particles with internal structure as probes \cite{Puetzfeld:2007ye,Puetzfeld:2007hr}. In general, these particles contain all three pieces of the  hypermomentum current \cite{hehl1976hypermomentum}, namely those of spin (antisymmetric), dilation (trace) and shear (symmetric traceless).

The equations of motion governing the trajectory that a microstructured test body follows in a generic non-Riemannian background  have been derived in \cite{obukhov2014conservation} using the conservation laws of Metric-Affine Gravity and the multipole moments method. Another approach was used in \cite{Iosifidis:2023eom} where instead the convecting $ansatz \ddot{e}$ of the energy tensors were used along with the conservation laws. The results obtained by these two seemingly different approaches are in perfect agreement with one another. However, in both of these works, no Lagrangian was used to derive the associated equations of motion from a variational principle.\footnote{Of course, it is quite essential to ask wether these path equations can be derived at all from a Lagrangian. The content of this paper proves that, indeed, such trajectories do follow from a variational principle.} The Lagrangian formulation of the motion offers many advantages. For instance, having a Lagrangian, one unambiguously defines the canonical momentum as well as the rest of the important kinematic variables. Furthermore, in order to develop singularity theorems, the existence of a Lagrangian is unavoidable. It is therefore important to have a Lagrangian description for the motion of test bodies (or particles) in external backgrounds. Generalizing the purely Riemannian only spin case developed in \cite{Bailey:1975fe}, in this work, we extend these results by considering a general Metric-Affine space and a body possessing also the dilation and shear charges of hypermomentum (along with the spin). 

The paper is organized as follows: We first introduce the underlying geometric structure by defining the associated important geometric entities. In this background we define the   relevant kinematic variables. We subsequently construct the Lagrangian to be varied upon. Performing arbitrary path and material frame variations we derive the equations of motion for the test body. Finally, using the derived equations and the notion of a generalized Killing vector field we explicitly obtain a conserved quantity of the motion.

\section{Geometry and Dynamics}
We consider a 4-dimensional non-Riemannian space consisting of a metric $g$ and an independent  linear connection $\nabla$, with components in local coordinates $g_{\mu\nu}$ and $\Gamma^{\lambda}{}_{\mu\nu}$, respectively. The torsion and curvature of the connection are defined by
\beq
S_{\mu\nu}^{\;\;\;\lambda}:=\Gamma^{\lambda}_{\;\;\;[\mu\nu]}
\eeq
and 
\beq
R^{\mu}_{\;\;\;\nu\alpha\beta}:= 2\partial_{[\alpha}\Gamma^{\mu}_{\;\;\;|\nu|\beta]}+2\Gamma^{\mu}_{\;\;\;\rho[\alpha}\Gamma^{\rho}_{\;\;\;|\nu|\beta]} \label{R}
\eeq
respectively. The non-compatibility of the generic connection is quantified in the non-metricity tensor
\beq
Q_{\alpha\mu\nu}:=- \nabla_{\alpha}g_{\mu\nu}
\eeq
The above three geometric entities fully characterize this generalized non-Riemannian space. More compactly, the deviation from the torsionless and metric-compatible Riemannian geometry is encoded in the difference between the general connection and the one of Levi-Civita. In particular, if we denote by $\widetilde{\Gamma}^{\lambda}{}_{\mu\nu}$ the Levi-Civita connection, then the difference 
\beq
N^{\lambda}{}_{\mu\nu}=\Gamma^{\lambda}{}_{\mu\nu}-\widetilde{\Gamma}^{\lambda}{}_{\mu\nu} \label{defN}
\eeq
defines the so-called distortion tensor \cite{hehl1995metric}. The torsion and non-metricity are easily retrieved through the relations \cite{hehl1995metric,iosifidis2019metric}
\beq
S_{\mu\nu\alpha}=N_{\alpha[\mu\nu]}\;\;,\;\;\; Q_{\nu\alpha\mu}=2 N_{(\alpha\mu)\nu} \label{QNSN}
\eeq
In addition, with the use of (\ref{defN}) each quantity can be separated into its Riemannian part, denoted by an overhead tilde, and its post-Riemannian contributions of torsion and non-metricity.

Moving on to the kinematics, let $x^{\mu}(\lambda)$ be the  representative   worldline of an extended test body (or particle) in terms of the arbitrary parameter $\lambda$. The un-normalized velocity field is then defined by
\beq
\upsilon^{\mu}:=\frac{d x^{\mu}}{d \lambda}
\eeq
If $\tau$ is the proper time, then the normailized velocity field is given as usual
\beq
u^{\mu}:=\frac{d x^{\mu}}{d \tau}\;\;, \;\;u_{\mu}u^{\mu}=-1
\eeq
Furthermore, the internal structure of the particle (i.e. spin, dilation and shear) is described by the material frame (or tetrad or vierbein) field  $e_{\mu}{}^{a}(\lambda)=e_{\mu}{}^{(a)}(\lambda)$ with $a=0,1,2,3$, attached to each point of the representative curve. Generally, this material frame is 'elastic' in the sense that it undergoes arbitrary deformations upon transportation in space.    

The absolute derivative with respect to the affine parameter along the world-line will be denoted as $\frac{\delta}{\delta \lambda}\equiv \frac{D}{d \lambda}$. In particular, when acting on the frame field: 
\beq
\dot{e}_{\mu}{}^{a}:=\frac{D e_{\mu}{}^{a}}{d \lambda}=\frac{d e_{\mu}{}^{a}}{d \lambda}-\Gamma^{\lambda}{}_{\mu\nu}e_{\lambda}{}^{a}\upsilon^{\nu} \label{abscove}
\eeq

Extending the results of \cite{Bailey:1975fe} to Metric-Affine Geometries and particles with microstructure, we now consider a generic Lagrangian of the form $L(\upsilon^{\mu}, e_{\mu}{}^{a}, \dot{e}_{\mu}{}^{a}, g_{\mu\nu})$ with the corresponding reparametrization invariant particle action\footnote{Similar to \cite{Bailey:1975fe}, we only specify the geometric (background) and kinematic variables that this Lagrangian depends upon, without specifying its exact form. This approach is also reinforced by the fact that no specific form of a Lagrangian should be postulated  to derive the path equations.}
\beq
I=\int  L(\upsilon^{\mu}, e_{\mu}{}^{a}, \dot{e}_{\mu}{}^{a}, g_{\mu\nu}) d \lambda \label{action}
\eeq
Let us note that there is also an implicit dependence on the affine-connection hidden in $\dot{e}_{\mu}{}^{a}$ as is obvious from (\ref{abscove}). We should also stress that the material tetrad $e_{\mu}{}^{a}$ has a priori nothing to do with the geometry tetrad (vierbein) $h_{\mu}{}^{a}$ which defines the metric through $g_{\mu\nu}=h_{\mu}{}^{a}h_{\nu}{}^{b}\eta_{ab}$. However one can always use the freedom in choosing $h_{\mu}{}^{a}$ and make the latter coincide with the material frame at the cost of loosing the orthonormality condition on it. The inverse co-frame $e^{\mu}{}_{b}$ of the material frame field is defined through $e_{\mu}{}^{a}e^{\mu}{}_{b}=\delta^{a}_{b}$.  In general the material frame field is not orthonormal in the sense that the induced internal metric $g_{ab}=e^{\mu}{}_{a}e^{\nu}{}_{b}g_{\mu\nu}$ is not the flat Minkowski metric. This 'elasticity' of the material frame fields is due to spacetime non-metricity and also reflects the fact that there is a genuine number of $16$ degrees of freedom encoded in $e_{\mu}{}^{a}$ instead of the $6$ ones that one has in the Riemannian case, see e.g. \cite{Bailey:1975fe}.

The canonical momentum is defined through
\beq
P_{\mu}:=\frac{\partial L}{\partial \upsilon^{\mu}}
\eeq
and here we also define the excitations of hypermomentum and energy-momentum according to
\beq
H_{\nu}{}^{\mu}:=2 e_{\nu}{}^{a} \frac{\partial L}{\partial \dot{e}_{\mu}{}^{a}}\;\;, \;\; t^{\mu\nu}:=2 \frac{\partial L}{\partial g_{\mu\nu}}
\eeq
The antisymmetric part of the former gives the spin tensor, i.e. $S_{\mu\nu}=H_{[\mu\nu]}$ (compare also with \cite{Bailey:1975fe}). In our construction, the hypermomentum excitation is generic and  has also a trace (dilation) as well as a symmetric trace free part (shear), in order to include all microproperties of matter \cite{Hehl:1997zh,hehl1997ahadronic,Puetzfeld:2007hr}.

\section{Equations of motion}
We are now in a position to derive the equations that describe the motion of the micro-structured body in the generic non-Riemannian geometry. Firstly, we derive the translational equations that describe how the momentum changes during the motion. To this end, we consider a 1-parameter family of curves $x^{\alpha}=x^{\alpha}(\lambda,\varepsilon)$ and the associated tetrads (or n-ads in general) $e_{\mu}{}^{b}(\lambda,\varepsilon)$ defined at each point of them, which are kept fixed under the $\varepsilon$-transport, that is, $\frac{\delta e_{\mu}{}^{b}}{\delta \varepsilon}=0$. We then need to extremize $I(\varepsilon)=\int_{\lambda_{1}}^{\lambda_{2}} L d\lambda$ subject to the fixed boundary conditions $x^{\mu}(\lambda_{i},\epsilon)=0$ for $i=1,2$. We have
\beq
\frac{dI}{d \varepsilon}=\int_{\lambda_{1}}^{\lambda_{2}}\left( \frac{\partial L}{\partial \upsilon^{\mu}}\frac{\delta \upsilon^{\mu}}{\delta \varepsilon}+\frac{\partial L}{\partial \dot{e}_{\mu}{}^{a}}\frac{\delta \dot{e}_{\mu}{}^{a}}{\delta \varepsilon}+\frac{\partial L}{\partial g_{\mu\nu}}\frac{\delta g_{\mu\nu}}{\delta \varepsilon}\right) d\lambda =0 
\eeq
Then using the identities (the proofs are given in the Appendix)
  \begin{equation}
    \frac{\delta \upsilon^{\mu}}{\delta \varepsilon}=\frac{\delta}{\delta \lambda}\left( \frac{\partial x^{\mu}}{\partial \varepsilon}\right)+ 2 S_{\alpha\beta}{}{}^{\mu}\upsilon^{\alpha}\frac{\partial x^{\beta}}{\partial \varepsilon} 
 \end{equation}
  \begin{equation}
    \frac{\delta }{\delta \varepsilon}\left( \frac{\delta e_{\mu}{}^{c}}{\delta \lambda}\right)=e_{\lambda}{}^{c}R^{\lambda}{}_{\mu\alpha\beta}\upsilon^{\alpha}\frac{\partial x^{\beta}}{\partial \varepsilon} 
 \end{equation}
 and the definition of non-metricity we derive the
translational equations of motion:
\beq
\boxed{\frac{D P_{\mu}}{d\lambda}=-2 S_{\mu\alpha\beta}\upsilon^{\alpha}P^{\beta}+\frac{1}{2}H^{\alpha\beta}R_{\alpha\beta\gamma\mu}\upsilon^{\gamma}-\frac{1}{2}Q_{\mu\alpha\beta}t^{\alpha\beta} }\label{eq1}
\eeq

The latter describe the evolution of the momentum during the motion. Expressed in this form, it is remarkable to observe how each geometric entity beautifully couples to the kinematical characteristics of matter; torsion couples to the canonical momentum (or more generally to the canonical energy-momentum tensor), curvature couples to hypermomentum and non-metricity couples to the metrical energy-momentum tensor. 

Moving on to the equations governing the 'rotational' hypermomentum part, we vary (\ref{action}) with respect to the material frame to readily arrive at
\beq
\frac{\partial L}{\partial e_{\mu}{}^{a}}-\frac{D}{d \lambda} \left(\frac{\partial L}{\partial \dot{e}_{\mu}{}^{a}} \right)=0 \label{framefe}
\eeq
We now use the identity (proof in the Appendix)
 \begin{equation}
 \frac{\partial L}{\partial \upsilon^{\nu}}\upsilon^{\mu}   -\frac{\partial L}{\partial e_{\mu}{}^{a}}e_{\nu}{}^{a}-\frac{\partial L}{\partial \dot{e}_{\mu}{}^{a}}\dot{e}_{\nu}{}^{a}-2 \frac{\partial L}{\partial g_{\mu\lambda}}g_{\nu\lambda}=0 
 \end{equation}
that comes from the diffeomorphism invariance of the particle action. Using the latter, and the definitions of momentum, hypermomentum and energy-momentum, we multiply (and contract) the field equations (\ref{framefe}) with $e_{\nu}{}^{a}$. After some trivial algebra we arrive at 
\beq
\frac{D H^{\nu\mu}}{d \lambda}=2 (P^{\nu}\upsilon^{\mu}-t^{\mu\nu})+H^{\beta\mu}Q_{\alpha\beta}{}{}^{\nu}\upsilon^{\alpha}  \label{eq2}
\eeq
which governs the evolution of the 'rotational' part of the particle. If we take the antisymmetric part of the latter we obtain the generalized evolution equation for the spin
\beq
\frac{D S^{\nu\mu}}{d \lambda}=2 P^{[\nu}\upsilon^{\mu]}+H^{\beta[\mu}Q_{\alpha\beta}{}{}^{\nu]}\upsilon^{\alpha}  
\eeq
in which we immediately observe that non-metricity acts as an additional torque acting on the test body.

Note that, if we bring the first index down, the last term on the right-hand side of the (\ref{eq2}) cancels and we obtain the more compact expression
\beq
\boxed{\frac{D H_{\nu}{}^{\mu}}{d \lambda}=2 (P_{\nu}\upsilon^{\mu}-t^{\mu}{}_{\nu}) }\label{Heq} 
\eeq

Performing a post-Riemannian expansion on the left-hand sides of (\ref{eq1}) and (\ref{eq2}), and bringing the momentum index up, after some trivial algebra, we may recast the system of equations in the expanded form
\begin{align}
\frac{\widetilde{D}P^{\nu}}{d\lambda}&=N_{\alpha\beta}{}{}^{\nu}(P^{\alpha}\upsilon^{\beta}-t^{\alpha\beta})+\frac{1}{2}H^{\alpha\beta}R_{\alpha\beta\gamma}{}{}{}^{\nu}\upsilon^{\gamma} \label{Peq} \\
\frac{\widetilde{D}H^{\nu\mu}}{d \lambda}&=2 (\upsilon^{\mu}P^{\nu}-t^{\mu\nu})+u^{\beta}(N^{\alpha\nu}{}{}_{\beta}H_{\alpha}{}^{\mu}-N^{\mu}{}_{\alpha\beta}H^{\nu\alpha})
\end{align}
The above system of equations, or equivalently (\ref{eq1}) together with (\ref{eq2}), fully describes the motion of a spinning-dilating-shearing  (microstructured) test body in generic non-Riemannian backgrounds with curvature, torsion and non-metricity. Our result generalizes the Mathisson-Papapetrou  equations for a spinning body \cite{Mathisson:1937zz,Papapetrou:1951pa} , including also the shear and dilation currents of matter. It also extends the result of \cite{Iosifidis:2023eom} in the sense that the canonical momentum $P_{\mu}$ and the energy-momentum tensor $t_{\mu\nu}$ are not specified at this point and are computed once the exact form of the Lagrangian is given. Furthermore, our result holds true not only for point particles but also for extended ones since it is valid to any multipole order. 

It is also worthwhile to note that the expressions for the equations of motion are formalistically the same regardless of the  choice of the Lagrangian. The Lagrangian does come into play though, when one needs to specify the canonical momentum and hypermomentum characterizing the moving material.

\section{Special Cases}

Let us now be more specific about the functional form of the particle Lagrangian in order to illustrate how our generic formalism reproduces certain results of the literature as special cases. To start with, we recall that when the mass is constant and the particle is structureless, moving in an external Riemannian background, the particle action reads 
\beq
S_{0}[g_{\mu\nu}, \upsilon^{\alpha}]=-m c\int \sqrt{-g_{\mu\nu}\dot{x}^{\mu}\dot{x}^{\nu}} d\lambda=-m c\int \sqrt{-g_{\mu\nu}\upsilon^{\mu}\upsilon^{\nu}} d\lambda
\eeq
where $m$ is the mass of the particle and $c$ is the speed of light we  shall set to unity ($c=1$) from now on. Note that there is no dependence on the material frame due to the fact that the particle is structureless. Of course, variations of the path give the geodesic equations of motion, $\frac{D P^{\alpha}}{d \lambda}=0$. Therefore, in order to have this Riemannian geometry-structureless particle limit, it is meaningful to assume that the full action can be broken into
\beq
I=S_{0}+S_{1}=-m \int \sqrt{-g_{\mu\nu}\upsilon^{\mu}\upsilon^{\nu}} d\lambda+\int L_{1}(\upsilon^{\mu}, e_{\mu}{}^{a}, \dot{e}_{\mu}{}^{a}) d \lambda 
\eeq
With this choice only the first part contributes to the energy-momentum tensor, which is given by
\beq
t^{\mu\nu}=m u^{\mu}u^{\nu}
\eeq
and only the second piece gives rise to hypermomentum. On the other hand, both parts contribute to the canonical momentum, since, obviously,
\beq
P_{\mu}=m u_{\mu}+\xi_{\mu}
\eeq
where $\xi_{\mu}=\frac{\partial L_{1}}{\partial \upsilon^{\mu}}$. For this choice of the Lagrangian, the system of equations describing the  trajectory of the microstructured body, takes the form
\begin{align}
\frac{\widetilde{D} P^{\nu}}{d \tau}&=\frac{1}{2}H^{\alpha\beta} u^{\gamma}R_{\alpha\beta\gamma}^{\;\;\;\;\;\;\nu}+N_{\beta\alpha}^{\;\;\;\;\nu}u^{\alpha}(P^{\beta}-m u^{\beta})
\\
\frac{\widetilde{D} H^{\nu\mu}}{d \tau}&=2 u^{\mu}(P^{\nu}-m u^{\nu})+u^{\beta}\Big( N^{\alpha\nu}{}{}_{\beta}H_{\alpha}{}^{\mu}-N^{\mu}{}_{\alpha\beta}H^{\nu\alpha} \Big)\label{Hdynamics}
\end{align}
which is in perfect agreement with \cite{Iosifidis:2023eom} (see also \cite{Obukhov:2015eqa}). Therefore, our Lagrangian description for the equations of motion is in accordance (and complementary) with the result obtained by integrating the conservation laws \cite{Iosifidis:2023eom} and the multipolar expansion method \cite{Obukhov:2015eqa}. 

\subsection{Pure Dilation}

Becoming more specific, let us consider an explicit form for $L_{1}$. The most simple example is that of a pure dilation particle, having the hypermomentum 
\beq
H^{\mu\nu}=\Delta g^{\mu\nu} \label{Hdil}
\eeq
where $\Delta$ is the dilation charge. It is not difficult to show that the extra Lagrangian piece that gives rise to such a contribution is $L_{1}=\Delta e^{\mu}{}_{a}\dot{e}_{\mu}{}^{a}$. Therefore, the full action reads
\beq
I= \int \Big( -m\sqrt{-g_{\mu\nu}\upsilon^{\mu}\upsilon^{\nu}}+\Delta e^{\mu}{}_{a}\dot{e}_{\mu}{}^{a}\Big) d\lambda \label{puredil}
\eeq
Variation with respect to the frame field gives $\dot{\Delta}=0 $, namely it implies that the dilation charge is constant, in perfect agreement with \cite{Iosifidis:2023eom}. The associated canonical momentum and energy-momentum corresponding to (\ref{puredil}) are easily found to be $P_{\mu}=m \upsilon_{\mu}$ and $t^{\mu\nu}=m \upsilon^{\mu}\upsilon^{\nu}$ respectively, that is, they do not receive contributions from $L_{1}$. With these at hand, we then vary the path, or directly substitute in (\ref{eq1}), to obtain the equations of the trajectory:
\beq
     \frac{d^{2}x^{\nu}}{d \tau^{2}}+\widetilde{\Gamma}^{\nu}_{\;\;\alpha\beta}u^{\alpha}u^{\beta} =\frac{1}{2 m}\Delta \upsilon_{\mu}\partial^{[\mu}Q^{\nu]}. \label{pathDilation}
\eeq
which show that there is an additional Lorentz-like force acting on the dilation charged particle, again in perfect agreement with \cite{Iosifidis:2023eom}. In this setting the particle may follow a geodesic only if the host spacetime has Weyl-integrable non-metricity vector, i.e. $Q_{\mu}=\partial_{\mu}\phi$ for some scalar $\phi$. In general, even in the simplest case of pure dilation, the trajectory is not a geodesic.

\section{Conserved Quantities}

Of course, it is an important aspect to be able to speak about quantities that are conserved during the motion. For this we first need to recall the definition of a generalized Killing vector field in Metric-Affine geometry. We have the following definition.
\begin{mydef*}
 Consider a generic Metric-Affine space. The vector field $\zeta$ is said to be a \underline{generalized Killing vector field} if it is simultaneously an isometry and an isoparallelism in the sense that its action leaves invariant the geometric background entities:
\begin{align}
\Lie_{\zeta}g_{\mu\nu}&=0 \;\;(isometry) \\
\Lie_{\zeta}\Gamma^{\lambda}{}_{\mu\nu}&=0 \;\; (isoparallelism)
\end{align}
\end{mydef*}

Suppose now that $\zeta^{\mu}$ is such a generalized Killing vector field. We then establish the following.
\begin{theorem*}
Let $\zeta^{\mu}$ be a generalized Killing vector field. Then the quantity
\beq
P_{\mu}\zeta^{\mu}+\frac{1}{2}H^{\mu\nu}\widetilde{\nabla}_{\mu}\zeta_{\nu}-H^{\mu\nu}N_{\mu\nu\kappa}\zeta^{\kappa}=const. \label{consquant}
\eeq
is a constant of motion\footnote{This result is in perfect agreement with \cite{Obukhov:2015eqa} where the multipole expansion method was used. It also agrees with \cite{Hojman:1978wk} when restricting to Riemann-Cartan spaces and vanishing dilation and shear charges.}, in the sense that it has a vanishing derivative along the trajectory, i.e.
\beq
    \frac{D}{d \lambda}\left(P_{\mu}\zeta^{\mu}+\frac{1}{2}H^{\mu\nu}\widetilde{\nabla}_{\mu}\zeta_{\nu}-\frac{1}{2}H^{\mu\nu}N_{\mu\nu\kappa}\zeta^{\kappa}\right) =0
\eeq
\end{theorem*}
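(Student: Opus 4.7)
The strategy is to compute $\frac{D}{d\lambda}$ of the candidate conserved quantity $Q\equiv P_\mu\zeta^\mu+\frac{1}{2}H^{\mu\nu}\widetilde{\nabla}_\mu\zeta_\nu-\frac{1}{2}H^{\mu\nu}N_{\mu\nu\kappa}\zeta^\kappa$ term by term using the Leibniz rule, substitute the equations of motion (\ref{eq1}) and (\ref{Heq}) (equivalently (\ref{eq2})) for $DP_\mu/d\lambda$ and $DH^{\mu\nu}/d\lambda$, and then invoke the two generalized Killing conditions to show the resulting expression is identically zero. Throughout the calculation I would use $\nabla_\alpha\zeta^\mu=\widetilde{\nabla}_\alpha\zeta^\mu+N^\mu{}_{\alpha\beta}\zeta^\beta$ to translate between affine and Levi-Civita absolute derivatives of $\zeta$, together with the algebraic identities (\ref{QNSN}) expressing $Q$ and $S$ through the distortion $N$.

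The isometry condition $\Lie_\zeta g_{\mu\nu}=0$, equivalent to $\widetilde{\nabla}_{(\mu}\zeta_{\nu)}=0$, does the bulk of the work: it eliminates every symmetric-in-$\mu\nu$ contraction against $\widetilde{\nabla}_\mu\zeta_\nu$, so that the $t^{\mu\nu}$ pieces generated by substituting (\ref{Heq}) drop out and $H^{\mu\nu}\widetilde{\nabla}_\mu\zeta_\nu$ reduces to a pure spin coupling $S^{\mu\nu}\widetilde{\nabla}_\mu\zeta_\nu$, recovering the familiar Hojman-type term. The residual torsion and non-metricity contributions $-2S_{\mu\alpha\beta}\upsilon^\alpha P^\beta\zeta^\mu-\frac{1}{2}Q_{\mu\alpha\beta}t^{\alpha\beta}\zeta^\mu$ produced by (\ref{eq1}) are then cancelled by the $\lambda$-derivative of the correction $-\frac{1}{2}H^{\mu\nu}N_{\mu\nu\kappa}\zeta^\kappa$ together with the $P_\mu\upsilon^\alpha N^\mu{}_{\alpha\beta}\zeta^\beta$ piece generated in the $\widetilde{\nabla}\leftrightarrow\nabla$ conversion---this is precisely the role of the non-Riemannian correction term, which is absent in the purely Riemannian case. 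The only surviving contributions are the curvature piece $\frac{1}{2}H^{\alpha\beta}R_{\alpha\beta\gamma\mu}\upsilon^\gamma\zeta^\mu$ from (\ref{eq1}) and the second-derivative piece $\frac{1}{2}H^{\mu\nu}\upsilon^\alpha\nabla_\alpha\widetilde{\nabla}_\mu\zeta_\nu$, which are dispatched by the isoparallelism $\Lie_\zeta\Gamma^\lambda_{\mu\nu}=0$: in component form this yields an identity of the schematic type $\nabla_\mu\nabla_\nu\zeta^\lambda=R^\lambda{}_{\nu\alpha\mu}\zeta^\alpha+(\text{corrections in }N,S)$, the corrections once again being absorbed by the preceding $N$-bookkeeping step.

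The main obstacle is exactly this meticulous index bookkeeping: one must unpack $\Lie_\zeta\Gamma^\lambda_{\mu\nu}=0$ not as an abstract statement but as an explicit component identity relating $\nabla\nabla\zeta$ to $R\zeta$ and $\nabla N\,\zeta$, and then match the $N$-derivative terms against those arising from differentiating $-\frac{1}{2}H^{\mu\nu}N_{\mu\nu\kappa}\zeta^\kappa$. Symmetrizations must be applied at precisely the right moment (discarding $H^{(\mu\nu)}\widetilde{\nabla}_\mu\zeta_\nu$ and $t^{\mu\nu}\widetilde{\nabla}_\mu\zeta_\nu$) to make the cancellations visible; once the routing is organized, $DQ/d\lambda=0$ follows algebraically from (\ref{eq1}), (\ref{Heq}), the identities (\ref{QNSN}), and the two generalized Killing conditions.
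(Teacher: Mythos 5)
Your proposal is correct and follows essentially the same route as the paper's proof: Leibniz differentiation of the candidate quantity, substitution of the translational and hypermomentum equations of motion, use of the antisymmetry $\widetilde{\nabla}_{(\mu}\zeta_{\nu)}=0$ to drop the $t^{\mu\nu}$ contractions, the isoparallelism condition in the component form $\zeta^{\alpha}R^{\lambda}{}_{\mu\nu\alpha}=\nabla_{\nu}\widetilde{\nabla}_{\mu}\zeta^{\lambda}+2\nabla_{\nu}(N^{\lambda}{}_{\mu\alpha}\zeta^{\alpha})$ to dispose of the curvature and second-derivative pieces, and the absorption of the residual distortion terms into the derivative of $-\tfrac{1}{2}H^{\mu\nu}N_{\mu\nu\kappa}\zeta^{\kappa}$. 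The only (immaterial) difference is organizational: the paper splits the computation into $\tfrac{D}{d\lambda}(P_{\mu}\zeta^{\mu})$ and $\tfrac{D}{d\lambda}(\tfrac{1}{2}H^{\mu\nu}\widetilde{\nabla}_{\mu}\zeta_{\nu})$ separately before adding, and works from the post-Riemannian expanded form of the momentum equation, whereas you differentiate the full quantity at once starting from the unexpanded form and convert via the identities relating $S$, $Q$ and $N$.
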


\begin{proof}

 We start by contracting (\ref{Peq}) with $\zeta_{\nu}$. Using the Leibniz rule and also eq. (\ref{Heq}) we find
 \begin{gather}
\frac{D}{d \lambda}\Big( P_{\mu}\zeta^{\mu} \Big)=P^{\nu}\upsilon^{\mu}\widetilde{\nabla}_{\mu}\zeta_{\nu}+\frac{1}{2}H^{\alpha\beta}R_{\alpha\beta\gamma\mu}\upsilon^{\gamma}\zeta^{\mu}+\frac{1}{2}N^{\alpha}{}_{\beta\nu}\zeta^{\nu}\frac{D H_{\alpha}{}^{\beta}}{d \lambda} \label{Pzeta}
 \end{gather}
 where we have used the fact that $\zeta^{\nu}$ generates an isometry, namely $\widetilde{\nabla}_{(\mu}\zeta_{\nu)}=0$ is valid.
In a similar manner, contracting (\ref{Heq}) with $\widetilde{\nabla}_{\mu}\zeta^{\nu}$, using again Leibniz and the fact that $\zeta_{\mu}$ generates isometry, it follows that
\beq
2 P^{\nu}\upsilon^{\mu}\widetilde{\nabla}_{\mu}\zeta_{\nu}=\frac{D}{d \lambda}\Big( H^{\nu\mu}\widetilde{\nabla}_{\mu}\zeta_{\nu}\Big)-H_{\lambda}{}^{\mu}\upsilon^{\nu}\nabla_{\nu}\widetilde{\nabla}_{\mu}\zeta^{\nu} \label{intermediatestep}
\eeq

We now use the fact that $\zeta^{\mu}$ is also an isoparallelism.\footnote{Or equivalently, it generates an affine motion \cite{yano2020theory}.} This translates into the condition (see for instance \cite{yano2020theory})
\beq
\zeta^{\alpha}R^{\lambda}{}_{\mu\nu\alpha}=\nabla_{\nu}\nabla_{\mu}\zeta^{\lambda}+2 \nabla_{\nu}(S_{\mu\alpha}{}{}^{\lambda}\zeta^{\alpha})
\eeq
Expanding $\nabla_{\mu}\zeta^{\lambda}=\widetilde{\nabla}_{\mu}\zeta^{\lambda}+N^{\lambda}{}_{\alpha\mu}\zeta^{\alpha}$, the latter can be put in the form
\beq
\zeta^{\alpha}R^{\lambda}{}_{\mu\nu\alpha}=\nabla_{\nu}\widetilde{\nabla}_{\mu}\zeta^{\lambda}+2 \nabla_{\nu}(N^{\lambda}{}_{\mu\alpha}\zeta^{\alpha})
\eeq
and with this (\ref{intermediatestep}) takes the form
\beq
\frac{D}{d \lambda}\left( \frac{1}{2}H^{\mu\nu}\widetilde{\nabla}_{\mu}\zeta_{\nu}\right)=-P^{\nu}\upsilon^{\mu}\widetilde{\nabla}_{\mu}\zeta_{\nu}-\frac{1}{2}H^{\alpha\beta}R_{\alpha\beta\gamma\mu}\upsilon^{\gamma}\zeta^{\mu}+\frac{1}{2}H_{\alpha}{}^{\beta}\frac{D}{d \lambda}(N^{\alpha}{}_{\beta\nu}\zeta^{\nu})
\eeq
Adding this to (\ref{Pzeta}) the first two terms cancel each other and we observe that the remaining two on the right-hand side form an exact derivative, namely
\beq
 \frac{D}{d \lambda}\left(P_{\mu}\zeta^{\mu}+\frac{1}{2}H^{\mu\nu}\widetilde{\nabla}_{\mu}\zeta_{\nu}\right)=\frac{1}{2}\frac{D}{d \lambda}\Big(H^{\mu\nu}N_{\mu\nu\kappa}\zeta^{\kappa}\Big)
\eeq
Bringing all terms to the left-hand side we then complete the proof.

\end{proof}

\subsection{More Conserved Quantities}

If no restrictions on the properties of the matter are imposed, the combination (\ref{consquant}) is the only constant of motion that one gets quite generally. However, much like the only spin, Riemannian case further constants of motion can be obtained by imposing certain conditions on the properties of the body. Most notably, in the Riemannian case (i.e. for $Q_{\alpha\mu\nu}=0=S_{\alpha\mu\nu}$) and for matter which carries only the spin, namely $H_{\mu\nu}=H_{[\mu\nu]}=S_{\mu\nu}=-S_{\nu\mu}$,  $S_{\mu\nu}$ being the spin tensor, the spin magnitude $S_{\mu\nu}S^{\mu\nu}$ is conserved if one imposes either the Tulczyjew constraint $S_{\mu\nu}P^{\mu}=0$ or the  Pirani condition $S_{\mu\nu}P^{\mu}=0$. These are oftentimes refereed to as Spin Supplementary Conditions.\footnote{There are also other less popular forms of spin supplementary conditions, see for instance \cite{Costa:2014nta,Costa:2017kdr} for a discussion.}

In a similar manner to the aforementioned spin case, we can get extra constants of motion if we impose certain conditions on the (now) full hypermomentum tensor. To see this we contract (\ref{eq2}) with $H_{\mu\nu}$ and use the identity
\beq
H_{\mu\nu} \frac{D H^{\nu\mu}}{d \lambda}=\frac{1}{2}\frac{D}{d \lambda}\Big( H^{\nu\mu}H_{\mu\nu} \Big)+H^{\beta\mu}H_{\mu\nu}Q_{\alpha\beta}{}{}^{\nu}\upsilon^{\alpha}
\eeq
to find
\beq
\frac{D}{d \lambda}\Big( H^{\nu\mu}H_{\mu\nu} \Big)=4 (P^{\nu}\upsilon^{\mu}-t^{\mu\nu})H_{\mu\nu} \label{constr}
\eeq
Thus, if the hypermomentum tensor obeys $(P^{\nu}\upsilon^{\mu}-t^{\mu\nu})H_{\mu\nu}=0$, the magnitude $H^{\mu\nu}H_{\nu\mu}$ is conserved. For the only spin case this condition reduces to $S_{\mu\nu}P^{\mu}\upsilon^{\nu}$ being zero, which is valid if one imposes either of the Tulczyjew or Pirani constraints. In our Lagrangian description, however, much like the one in \cite{Bailey:1975fe}, such supplementary conditions do not need to be imposed.  Finally, let us observe that in the pure dilation case (recall eq. (\ref{Hdil})) the right-hand side vanishes identically and then eq. (\ref{constr}) implies $\Dot{\Delta}=0$ in accordance with the result of the previous section.

\section{Conclusions} 
We have developed the Lagrangian formulation that describes  the dynamics of extended bodies with microstructure (i.e. having non-zero hypermomentum), possessing all spin, dilation and shear charges, moving in generic Non-Riemannian backgrounds.  Consequently, using the least action principle, we  derived the equations of motion of the spinning-dilating-shearing multipole in generalized geometric backgrounds possessing  torsion, curvature and non-metricity. Our results generalize other Lagrangian descriptions of the Mathisson-Papapetrou equations of motion for spinning multipoles \cite{Bailey:1975fe}.

It should be noted that  our construction is fairly general. We only specified the fundamental variables upon which the particle Lagrangian  depends, without choosing a particular form. Apart from generality, the superiority of this development (canonical formalism) is that   it precisely defines the physical role of the variables involved; the canonical momentum for instance. It also  avoids any a priori constraints  between the spin tensor and the velocity or  momentum. 

Constraining the Lagrangian in a certain way, we made contact with existing literature on the equations of motion for hypermomentum charged particles moving in non-Riemannian backgrounds \cite{Obukhov:2015eqa,Iosifidis:2023eom}. 
Considering a generalized Killing vector field, and using the derived equations of motion, we explicitly obtained a quantity that is conserved along the trajectory. Using these results, it would then be interesting to study the motion of a micro-structured test body in non-Riemannian backgrounds. In addition, the results of this study can be used to construct singularity theorems in Metric-Affine geometries. Finally, changing slightly the functional form of our Lagrangian, it is possible to formulate the theory of a relativistic spinning-dilating-shearing top with internal degrees of freedom in non-Riemannian backgrounds. Some of these aspects are currently under investigation.

\section{Acknowledgments}

  This work was supported by the  Istituto Nazionale di Fisica Nucleare (INFN), Sezioni  di Napoli,  {\it Iniziative Specifiche} QGSKY. I would like to thank very much Friedrich Hehl and Yuri Obukhov for many useful comments and discussions.

\appendix

\section{Identities}

Let us prove here the important identities that were used in the various derivations. These identities are valid for generic Metric-Affine (i.e. non-Riemannian) geometries possessing all three geometric features of curvature torsion and non-metricity. Of course for vanishing torsion and non-metricity these give back some standard results of Riemannian geometry.

\begin{Identity}
  \begin{equation}
    \frac{\delta \upsilon^{\mu}}{\delta \varepsilon}=\frac{\delta}{\delta \lambda}\left( \frac{\partial x^{\mu}}{\partial \varepsilon}\right)+ 2 S_{\alpha\beta}{}{}^{\mu}\upsilon^{\alpha}\frac{\partial x^{\beta}}{\partial \varepsilon} \label{id1}
 \end{equation}
    
\end{Identity}

\begin{proof}
Firstly, recall that for $x^{\alpha}(\lambda,\varepsilon)$ we have that $\upsilon^{\alpha}:=\frac{\partial x^{\alpha}(\lambda,\varepsilon)}{\partial \lambda}$,  $\varepsilon^{\beta}:=\frac{\partial x^{\beta}(\lambda,\varepsilon)}{\partial \varepsilon}$ and also $\frac{\delta}{\delta \varepsilon}:=\varepsilon^{\alpha}\nabla_{\alpha}$ as well as  $\frac{\delta}{\delta \lambda}:=\upsilon^{\alpha}\nabla_{\alpha}$. Now, since $\upsilon^{\alpha}$ and $\varepsilon^{\beta}$ form a coordinate basis, they have a vanishing Lie bracket, namely
\beq
[\upsilon, \varepsilon]=0 \Leftrightarrow \varepsilon^{\alpha}\nabla_{\alpha}\upsilon^{\beta}=\upsilon^{\alpha}\nabla_{\alpha}
\varepsilon^{\beta}+2 S_{\alpha\gamma}{}{}^{\beta}\upsilon^{\alpha}\varepsilon^{\gamma} \label{Liebracket}
\eeq
From the latter and the above definitions, the identity (\ref{id1}) trivially follows.
    
\end{proof}

\begin{Identity}
  \begin{equation}
    \frac{\delta }{\delta \varepsilon}\left( \frac{\delta e_{\mu}{}^{c}}{\delta \lambda}\right)=e_{\lambda}{}^{c}R^{\lambda}{}_{\mu\alpha\beta}\upsilon^{\alpha}\frac{\partial x^{\beta}}{\partial \varepsilon} \label{id2}
 \end{equation}
    
\end{Identity}

\begin{proof}

    We start from the curvature identity
\beq
\nabla_{\alpha} \nabla_{\beta}e_{\mu}{}^{c}=\nabla_{\beta} \nabla_{\alpha}e_{\mu}{}^{c}-R^{\lambda}_{\;\;\;\mu\alpha\beta} e_{\lambda}{}^{c}+2 S_{\alpha\beta}^{\;\;\;\;\;\gamma}\nabla_{\gamma}e_{\mu}{}^{c}
\eeq
Contracting with $\varepsilon^{\alpha}\upsilon^{\beta}$ and using the Leibniz rule, identity (\ref{Liebracket}) along with 
\beq
\frac{\delta e_{\mu}{}^{b}}{\delta \varepsilon}=0 
\eeq
we readily arrive at (\ref{id2}).

\end{proof}

\begin{Identity}
  \begin{equation}
 \frac{\partial L}{\partial \upsilon^{\nu}}\upsilon^{\mu}   -\frac{\partial L}{\partial e_{\mu}{}^{a}}e_{\nu}{}^{a}-\frac{\partial L}{\partial \dot{e}_{\mu}{}^{a}}\dot{e}_{\nu}{}^{a}-2 \frac{\partial L}{\partial g_{\mu\lambda}}g_{\nu\lambda}=0 \label{id3}
 \end{equation}
    
\end{Identity}

\begin{proof}
    This identity follows as a consequence of the diffeomorphism invariance of the material action (\ref{action}). Let us prove. Consider the  coordinate transformation
    \beq
    x^{\mu}\mapsto x'^{\mu}=x^{\mu}+\xi^{\mu} \label{diffs}
    \eeq where $\xi^{\mu}$ is an infinitesimal. Then by differentiation we also have that $
\partial_{\mu}x'^{\nu}=\delta_{\mu}^{\nu}+\partial_{\mu}\xi^{\nu}
    $. Using this and the tensor transformation laws for each of $\upsilon^{\mu}, e_{\mu}{}^{a}, \dot{e}_{\mu}{}^{a}$ and $g_{\mu\nu}$ we readily find
    \beq
\delta \upsilon^{\nu}=-\xi^{\rho}\partial_{\rho}\upsilon^{\nu}+(\partial_{\mu}\xi^{\nu})\upsilon^{\mu}
    \eeq
    \beq
\delta e_{\mu}{}^{a}=-\xi^{\rho}\partial_{\rho}e_{\mu}{}^{a}-(\partial_{\mu}\xi^{\nu})e_{\nu}{}^{a}
    \eeq
  \beq
\delta \dot{e}_{\mu}{}^{a}=-\xi^{\rho}\partial_{\rho}\dot{e}_{\mu}{}^{a}-(\partial_{\mu}\xi^{\nu})\dot{e}_{\nu}{}^{a}
    \eeq
    \beq
\delta g_{\mu\nu}=-\xi^{\rho}\partial_{\rho}g_{\mu\nu}-2 g_{\lambda(\mu}\partial_{\nu)}\xi^{\lambda}
\eeq
Consequently the total action transforms as
\begin{gather}
\delta I=\int d\lambda \left[ \frac{\partial L}{\partial \upsilon^{\nu}}\delta\upsilon^{\mu}   -\frac{\partial L}{\partial e_{\mu}{}^{a}}\delta e_{\nu}{}^{a}-\frac{\partial L}{\partial \dot{e}_{\mu}{}^{a}}\delta \dot{e}_{\nu}{}^{a}-2 \frac{\partial L}{\partial g_{\mu\lambda}}\delta g_{\nu\lambda} \right]\nonumber \\
=\int d \lambda (\partial_{\mu}\xi^{\nu}) \left[ \frac{\partial L}{\partial \upsilon^{\nu}}\upsilon^{\mu}   -\frac{\partial L}{\partial e_{\mu}{}^{a}}e_{\nu}{}^{a}-\frac{\partial L}{\partial \dot{e}_{\mu}{}^{a}}\dot{e}_{\nu}{}^{a}-2 \frac{\partial L}{\partial g_{\mu\lambda}}g_{\nu\lambda} \right]
\nonumber \\
-\int d\lambda \xi^{\rho}\left[  \frac{\partial L}{\partial \upsilon^{\nu}}\partial_{\rho}\upsilon^{\nu}   +\frac{\partial L}{\partial e_{\mu}{}^{a}}\partial_{\rho}e_{\mu}{}^{a}+\frac{\partial L}{\partial \dot{e}_{\mu}{}^{a}}\partial_{\rho}\dot{e}_{\mu}{}^{a}+2 \frac{\partial L}{\partial g_{\mu\nu}}\partial_{\rho}g_{\mu\nu} \right] \label{deltaI}
\end{gather}
Now observe that the last parenthesis is the partial derivative of the Lagrangian since from the chain rule we have
\beq
\partial_{\rho}L\equiv  \frac{\partial L}{\partial \upsilon^{\nu}}\partial_{\rho}\upsilon^{\nu}   +\frac{\partial L}{\partial e_{\mu}{}^{a}}\partial_{\rho}e_{\mu}{}^{a}+\frac{\partial L}{\partial \dot{e}_{\mu}{}^{a}}\partial_{\rho}\dot{e}_{\mu}{}^{a}+2 \frac{\partial L}{\partial g_{\mu\nu}}\partial_{\rho}g_{\mu\nu}
\eeq
Furthermore from (\ref{diffs}) it is obvious that $\xi^{\rho}$ is proportional to $\frac{d x^{\rho}}{d \lambda}$ so there exists an infinitesimal constant $\epsilon$ such that $\xi^{\rho}=\epsilon \frac{d x^{\rho}}{d \lambda}$. Therefore the second part of the variation is a total derivative term
\begin{gather}
    \xi^{\rho}\left[  \frac{\partial L}{\partial \upsilon^{\nu}}\partial_{\rho}\upsilon^{\nu}   -\frac{\partial L}{\partial e_{\mu}{}^{a}}\partial_{\rho}e_{\mu}{}^{a}-\frac{\partial L}{\partial \dot{e}_{\mu}{}^{a}}\partial_{\rho}\dot{e}_{\mu}{}^{a}-2 \frac{\partial L}{\partial g_{\mu\nu}}\partial_{\rho}g_{\mu\nu} \right]=\nonumber \\
    =\epsilon \frac{d x^{\rho}}{d \lambda}\partial_{\rho}L=\frac{d}{d \lambda}(\epsilon L)
\end{gather}
and consequently  it can be dropped. Then since by construction the particle action is diffeomorphism invariant (i.e. $\delta I=0$) and given that $\partial_{\mu}\xi^{\nu}$ is arbitrary, it follows that the  parenthesis in the second line of (\ref{deltaI}) must be zero identically which proves the statement (\ref{id3}).
\end{proof}








\bibliographystyle{unsrt}
\bibliography{ref}

\end{document}